
\documentclass[journal]{IEEEtran}
\ifCLASSINFOpdf
\else
\fi
%
%

\usepackage{bm}  
\usepackage{multicol}
\usepackage{multirow}
\usepackage{placeins}
\usepackage{graphicx}
\usepackage{amsmath}
\usepackage{array}
\newcolumntype{H}{>{\setbox0=\hbox\bgroup}c<{\egroup}@{}}
\usepackage{graphicx}
\usepackage{caption}
\usepackage{subcaption}
\usepackage{float}
\usepackage{tabularx}
\usepackage{arydshln}
\usepackage[autostyle]{csquotes}
\usepackage{color}

\usepackage[utf8]{inputenc}
\usepackage[english]{babel}

\usepackage{amsthm}

\usepackage{hyperref}
\usepackage{algorithm}
\usepackage{algpseudocode}

\usepackage{chngpage}
\usepackage[compress]{cite}

%
%









\newtheorem{theorem}{Theorem}

\def\blackslug{\hbox{\kern1pt\vrule height6pt width4pt  depth1pt\kern1pt}}

\def\edp{\penalty 500\hbox{\quad\blackslug}\ifmmode\else\par
	\vskip4.5pt plus3pt minus2pt\fi}

\newtheorem{reduction test}{Reduction Test}

\newtheorem{definition}{Definition}

\hyphenation{op-tical net-works semi-conduc-tor}

\begin{document}
%

%
\title{Solving group Steiner problems as Steiner problems: the rigorous proof}
\author{\IEEEauthorblockN{Yahui Sun} \\
	\IEEEauthorblockA{\url{https://yahuisun.com}}
}
%
	\maketitle
\begin{abstract}
The Steiner tree problems are well-known NP-hard problems that have diverse applications. Duin et al. (2004) have intuitively proposed the widely-used transformation from the classical group Steiner tree problem to the classical Steiner tree problem in graphs. This transformation has not been rigorously proven so far. Specifically, the large M value that is used in this transformation has not been specified. In this paper, we address this issue by rigorously prove this transformation for a specific large M value.
\end{abstract}

	\begin{IEEEkeywords}
		Graph theory, group Steiner tree problem
	\end{IEEEkeywords}
	
	%
\section{main content}
	
The Steiner tree problems, which are named after Jakob Steiner, a 19th-century mathematician at the University of Berlin, have diverse applications to social, biomedical, and computer communication networks \cite{cpan2018}. The classical Steiner Tree Problem in Graphs (STPG) \cite{tspi1971} is about finding the minimum-cost subgraph to connect some compulsory vertices together in a connected undirected graph with positive edge costs. Many more complex Steiner tree problems in graphs have been developed based on it, including the classical Group Steiner Tree Problem (GSTP) \cite{csta1999}, where Steiner trees must contain at least one vertex in each group of vertices. Duin et al. (2004) \cite{sgsp2004} intuitively showed that GSTP can be transformed to STPG by adding dummy vertices and edges. Even though this transformation has been widely used in the last few years (e.g. \cite{fato2009}), it has not been rigorously proven so far. Specifically, the large M value that is used in this transformation has not been specified. In this paper, we address this issue by rigorously prove this transformation for a specific large M value.

First, we formally define STPG and GSTP as follows.

\begin{definition} [The classical Steiner Tree Problem in Graphs] 
	Let $G(V,E,T,c)$ be a connected undirected graph, where $V$ is the set of vertices, $E$ is the set of edges, $T$ is a subset of $V$ that we refer to as compulsory vertices, and $c$ is a function which maps each edge in $E$ to a positive value that we refer to as edge cost. The purpose is to find a connected subgraph $G'(V',E'), T \subseteq V'\subseteq V, E'\subseteq E $  with the minimum cost $c(G')=\sum_{e \in E'}{c(e)}$.
\end{definition}
	
\begin{definition} [The classical Group Steiner Tree Problem] 
	Let $G(V,E,\Gamma,c)$ be a connected undirected graph, where $V$ is the set of vertices, $E$ is the set of edges, $\Gamma$ is a collection of subsets of $V$ that we refer to as groups, and $c$ is a function which maps each edge in $E$ to a positive value that we refer to as edge cost. The purpose is to find a connected subgraph $G'(V',E'), V'\subseteq V, E'\subseteq E $ with the minimum cost $c(G')=\sum_{e \in E'}{c(e)}$, and for each group $g \in \Gamma$, $g \cap V' \neq \emptyset$. 
\end{definition}

We refer to the optimal solutions to GSTP and STPG as Group Steiner Minimum Tree (GSMT) and Steiner Minimum Tree (SMT) respectively.  Then, we rigorously prove Duin et al.'s transformation from GSTP to STPG \cite{sgsp2004} as follows. 

\begin{theorem} \label{Theorem: transformation}
	Let $G(V,E,\Gamma,c)$ be a connected undirected graph. For each group $g \in \Gamma$, add a compulsory vertex $v_g$ and edges $(v_g,j)|\forall j \in g$ such that $c(v_g,j)=M= \sum_{e \in E}{c(e)}$. Let $\Theta$ be the GSMT on $G$, and $\Theta'$ be the SMT on the new graph $G'$. Then $\Theta=\Theta' \setminus \sum (v_g,j)$.
\end{theorem}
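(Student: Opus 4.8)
The plan is to establish a cost correspondence, up to the fixed additive constant $|\Gamma|M$, between feasible solutions of GSTP on $G$ and those feasible solutions of STPG on $G'$ in which every dummy vertex $v_g$ is a leaf, and then to show that an optimal STPG solution may always be taken to enjoy this leaf property; the latter is precisely where the choice $M=\sum_{e\in E}c(e)$ is used.

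First I would dispatch the easy direction. Given the GSMT $\Theta$ on $G$, which is connected and meets every group, choose for each $g\in\Gamma$ a vertex $j_g\in g\cap V(\Theta)$ and attach $v_g$ to $\Theta$ by the single edge $(v_g,j_g)$; the result is a connected subgraph of $G'$ containing all compulsory vertices, hence a feasible STPG solution of cost $c(\Theta)+|\Gamma|M$, so $c(\Theta')\le c(\Theta)+|\Gamma|M$. Symmetrically, if $S$ is a feasible STPG solution on $G'$ in which every $v_g$ is a leaf, then deleting all dummy vertices and edges from $S$ leaves a subgraph $S\setminus\sum(v_g,j)$ of $G$ that is still connected (deleting leaves preserves connectivity) and meets every group $g$ at the original neighbour of $v_g$ in $S$, hence is feasible for GSTP and has cost $c(S)-|\Gamma|M$.

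The core step, and the main obstacle, is to show that any feasible STPG solution $S$ on $G'$ can be rewritten, without increasing cost, as one in which every $v_g$ is a leaf. I would argue by decreasing the quantity $\sum_{g\in\Gamma}\deg_S(v_g)$: assuming $S$ is a tree and some $v_g$ has degree $d\ge 2$ with neighbours $j_1,\dots,j_d\in g$, delete $v_g$ to split $S$ into components $C_1\ni j_1,\dots,C_d\ni j_d$, re-attach $v_g$ to $C_1$ alone through $(v_g,j_1)$, then restore connectivity by adding, for each $i=2,\dots,d$, the edges of some path in $G$ from $j_1$ to $j_i$ and finally deleting edges to destroy any cycles. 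The edges added in this last stage form a subset of $E$, so their total cost is at most $\sum_{e\in E}c(e)=M$, whereas $d-1$ dummy edges of cost $M$ have been removed; the net change is at most $M-(d-1)M=-(d-2)M\le 0$, and no other $\deg(v_{g'})$ is affected. Iterating this move terminates in a feasible STPG solution, of cost at most $c(S)$, in which every $v_g$ is a leaf. Applying this to an SMT $\Theta'$ and combining with the first direction, $\Theta'\setminus\sum(v_g,j)$ is then a feasible GSTP solution of cost $c(\Theta')-|\Gamma|M\le c(\Theta)$; since $\Theta$ is a GSMT this forces equality, so $\Theta'\setminus\sum(v_g,j)$ is itself a GSMT (coinciding with $\Theta$ up to the inevitable ambiguity among equal-cost optima) and $c(\Theta')=c(\Theta)+|\Gamma|M$.

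The delicate point inside that core step is strictness: the estimate is strict as soon as $d\ge 3$, but for $d=2$ it collapses to an equality exactly when the reconnecting path is forced to use every edge of $E$ --- that is, when $G$ is a path both of whose endpoints lie in $g$ and $S$ uses no original edge at all. These residual configurations, together with the trivial case in which a single vertex already meets every group, have to be examined directly to confirm that the returned subgraph is precisely $\Theta$; ruling them out cleanly is exactly why $M$ is taken as large as $\sum_{e\in E}c(e)$ rather than merely ``sufficiently large''.
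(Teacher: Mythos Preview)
Your proof is correct and shares the paper's overall skeleton: the easy upper bound $c(\Theta')\le c(\Theta)+|\Gamma|M$, the claim that every dummy $v_g$ is a leaf in an optimal $\Theta'$, and then the reverse inequality by stripping the dummies. Where you diverge is in the argument for the leaf property. You perform an explicit local repair: delete the $d-1$ surplus dummy edges at a high-degree $v_g$, reconnect the pieces by paths inside $G$, and bound the net cost change by $(2-d)M\le 0$, iterating on the potential $\sum_g\deg(v_g)$. The paper instead argues by a single global cost comparison: if some $v_g$ is not a leaf, then $\Theta'$ contains at least $|\Gamma|+1$ dummy edges, so $c(\Theta')\ge (|\Gamma|+1)M$, while the candidate tree $\Theta_1$ obtained by hanging each $v_g$ as a leaf off the GSMT has cost at most $|\Gamma|M+M$; hence $\Theta'$ cannot be strictly better than $\Theta_1$, and one may take an SMT with the leaf property. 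Both arguments use the choice $M=\sum_{e\in E}c(e)$ in exactly the same way (any subgraph of $G$ costs at most $M$). The paper's route is a two-line counting estimate, yours is a constructive exchange argument; they are equally valid, and in fact share the same soft spot you call the ``delicate point'' --- the inequality is non-strict in borderline cases, so the conclusion is really that \emph{some} SMT has the leaf property and its stripped version is \emph{a} GSMT, which is all the theorem can mean once optima are non-unique. One small imprecision in your write-up: after you add the reconnecting paths and prune cycles, it is not literally true that ``no other $\deg(v_{g'})$ is affected'' (a dummy edge could sit on a newly created cycle), but since pruning can only lower such degrees your potential still strictly decreases and the iteration terminates.
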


\begin{proof}
	Clearly, there is a feasible solution to STPG on  $G'$ whose cost is $c(\Theta) +M|\Gamma| \geq c(\Theta')$. Let $\Theta_1$ be another tree in $G'$ that contains all the new vertices, and every new vertex is a leaf. Suppose that there is a new vertex $v_g$ in $\Theta'$ that is not a leaf, then
	\begin{eqnarray}\label{}
	\begin{array}{r l}
	c(\Theta') \geq &  c(\Theta' \setminus \sum (v_g,j)) +M(|\Gamma|+1) \geq\\
	c(\Theta_1) =  &  c(\Theta_1 \setminus \sum (v_g,j)) +M|\Gamma|
	\end{array}
	\end{eqnarray} 
	which is not possible. Thus, every new vertex is a leaf in $\Theta'$, which means that 1) $c(\Theta) \geq c(\Theta') - M|\Gamma| = c(\Theta' \setminus \sum (v_g,j))$; and 2) $\Theta' \setminus \sum (v_g,j)$ is a connected tree and thus a feasible solution to GSTP on $G$, i.e., $c(\Theta) \leq  c(\Theta' \setminus \sum (v_g,j))$. Therefore, $c(\Theta) = c(\Theta' \setminus \sum (v_g,j))$. This theorem holds.
\end{proof}

	\bibliographystyle{ieeetr}
	\bibliography{YahuiBibIEEE}

\end{document}